\documentclass[conference]{IEEEtran}
\usepackage{cite}
\usepackage{amsfonts,amssymb,amsthm,amsbsy,amsmath,paralist,bm,ifthen,color}
\usepackage{algorithm,algorithmic}
\usepackage{graphicx}
\usepackage{subfig}
\usepackage[font=footnotesize]{caption}
\usepackage{xcolor}
\usepackage{relsize}
\usepackage[numbers,sort&compress,square]{natbib}
\usepackage{tabularx}
\usepackage{booktabs}
\usepackage{multirow}
\usepackage{setspace}
\newtheorem{lemma}{Lemma}
\IEEEoverridecommandlockouts
\begin{document}
\captionsetup[figure]{labelfont={bf},name={Fig.},labelsep=period}
\title{RIS-aided D2D Communication Design for URLLC Packet Delivery}
\author{Jing Cheng$^\dag$, Chao Shen$^\dag$, Zheng Chen$^\ddag$, Nikolaos Pappas$^\S$\\
{\small $^\dag$State Key Laboratory of Rail Traffic Control and Safety, Beijing Jiaotong University, Beijing, China}\\
{\small $^\ddag$}Department of Electrical Engineering, Link\"oping University, 581 83 Link\"oping, Sweden\\
{\small $^\S$}Department of Science and Technology, Link\"oping University, Norrk\"oping, 602 21 Sweden\\
{\small Email: \{chengjing, chaoshen\}@bjtu.edu.cn, \{nikolaos.pappas, zheng.chen\}@liu.se}
\thanks{This work was supported in part by the Swedish Research Council (VR), ELLIIT, and CENIIT, in part supported by the the National Key R\&D Program of China (2020YFB1806604), the Fundamental Research Funds for the Central Universities +2020JBZD005, the NSFC, China (61871027, 62031008, U1834210) and the China Scholarship Council (CSC) Grant \#202007090174.}}
\maketitle

\begin{abstract}
  In this paper, we consider a smart factory scenario where a set of actuators receive critical control signals from an access point (AP) with reliability and low latency requirements. We investigate jointly active beamforming at the AP and passive phase shifting at the reconfigurable intelligent surface (RIS) for successfully delivering the control signals from the AP to the actuators within a required time duration. The transmission follows a two-stage design. In the first stage, each actuator can both receive the direct signal from AP and the reflected signal from the RIS. In the second stage, the actuators with successful reception in the first stage, relay the message through the D2D network to the actuators with failed receptions. We formulate a non-convex optimization problem where we first obtain an equivalent but more tractable form by addressing the problem with discrete indicator functions. Then, Frobenius inner product based equality is applied for decoupling the optimization variables. Further, we adopt a penalty-based approach to resolve the rank-one constraints. Finally, we deal with the $\ell_0$-norm by $\ell_1$-norm approximation and add an extra term $\ell_1-\ell_2$ for sparsity. Numerical results reveal that the proposed two-stage RIS-aided D2D communication protocol is effective for enabling reliable communication with latency requirements.
\end{abstract}
\section{Introduction}
Ultra-reliable and low-latency communication (URLLC) is one of the pillar technologies in $5$G and $6$G mobile communication networks. It is an enabler for a wide range of mission-critical applications including factory automation, smart grid, tactile internet, intelligent transportation system, telesurgery, and virtual/augmented reality (VR/AR) \cite{BennisPotI2018}. These applications pose great challenges for end-to-end packet delivery. More specifically, a URLLC packet has to be delivered to the destination successfully within sub-millisecond latency and extremely high reliability. This calls for new technologies and transmission mechanism design to satisfy such strict quality-of-service (QoS) requirements.

One promising technology is the reconfigurable intelligent surface (RIS) which has attracted interest recently. RIS, a metasurface composed of massive low-cost and passive reflecting elements, is able to reshape and reconfigure the wireless environment by intelligently adjusting the amplitudes and/or phase shifts of reflecting elements \cite{Tatino2020,LiaskosICM2018,Tatino2021}. As a result, integrating RIS into a communication system can improve the desired signal at the destination and facilitate reliable end-to-end communication. Moreover, as compared to active relays, RIS can work in a full-duplex mode without causing self-interference and thermal noise and its energy consumption is low owing to its passive nature \cite{WuIToWC2019}. Hence, the deployment of cost-effective RIS is important for enabling URLLC packet delivery. The authors in \cite{Ghanem2021} investigated the resource allocation design problem of a multicell RIS-aided OFDMA system for weighted sum throughput maximization subject to all URLLC users' QoS requirements.

Another potential technology is device-to-device (D2D) communication. In most mission-critical scenarios, devices (e.g. sensors, actuators, machines) are in close proximity to each other. Thus, the channel between devices is significantly more reliable than that between the access point (AP) and the devices \cite{XiaoIWC2016}. By exploiting the benefits of a D2D network, successful URLLC packet delivery can be enabled. In \cite{LiuITWC2018}, a D2D-based two-phase transmission protocol was proposed for industrial automation scenarios to achieve URLLC.

Recently, works considering the combination of both RIS and D2D technologies for system performance optimization have appeared. The authors in \cite{ChenIToWC2021} found that by integrating RIS into a D2D network as well as optimizing the transmit power and discrete phase shifts of RIS elements, the interference of D2D networks can be significantly decreased. In \cite{JiaIWCL2021}, authors investigated a problem of joint power control of D2D users and passive beamforming for an RIS-aided D2D communication network to maximize energy efficiency. Unfortunately, the existing RIS-aided D2D communication design cannot be applied to URLLC packet transmission directly since these studies were based on Shannon's capacity with assumptions of infinite blocklength and zero error probability. Thus, the delay performance will be underestimated and the reliability performance will be overestimated \cite{XuIToWC2020}. This necessitates the URLLC packet transmission design in an RIS-assisted D2D network under the finite blocklength regime.

In this paper, we consider an RIS-aided D2D network to achieve downlink multiusers' URLLC packet delivery in smart factory scenarios. The system involves two stages. In the first stage, each actuator's received signal consists of two parts -- a direct signal from the AP and a reflected signal from the RIS. In the second stage, actuators with successful reception in the first stage help relay critical messages to the ones who failed to receive via the D2D network and they use maximum-ratio combining (MRC) to jointly decode the message received in two stages. The main contributions of this paper are summarized as follows.
\begin{itemize}
  \item To the best of our knowledge, this paper is the first to consider the two-stage RIS-aided D2D communication protocol for URLLC packet delivery. To enable all actuators' reliable communication, a joint active and passive beamforming optimization problem is formulated.
  \item Instead of the common alternating optimization method, we first transform the original problem into an equivalent but more tractable form. Then we leverage the Frobenius inner product based equality for product decomposition and penalty-based method for tackling rank-one constraints. Eventually, we address the  $\ell_0$-norm by $\ell_1$-norm approximation plus $\ell_1-\ell_2$ for sparsity.
  \item A low-complexity penalized successive convex approximation (SCA) based iterative algorithm is proposed to obtain suboptimal solutions. Simulation results demonstrate the superiority of the proposed two-stage RIS-aided D2D communication protocol.
\end{itemize}
\section{System Model}
We consider a downlink communication system in a smart factory with one AP equipped with $N_t$ antennas and $K$ single-antenna actuators indexed by $k=\{1,\cdots,K\}$. All the actuators must successfully receive critical control messages in terms of short packets within a time window of $\tau$ seconds. As shown in Fig. \ref{systemModel}, there are $M$ reflecting elements on the RIS. The phase shift matrix of RIS is denoted by $\pmb{\Phi}=\mathrm{diag}(\phi_1,\cdots,\phi_M)$, where $\phi_m=e^{j\theta_m}$ and $\theta_m\in[0,2\pi]$ is the phase shift of the $m$-th reflecting element. Assume that the phase shifts of all elements are calculated by the AP and then fed back to the RIS controller through the dedicated RIS control link.
\begin{figure}[!htbp]
	\centering
	\includegraphics[width=.8\linewidth]{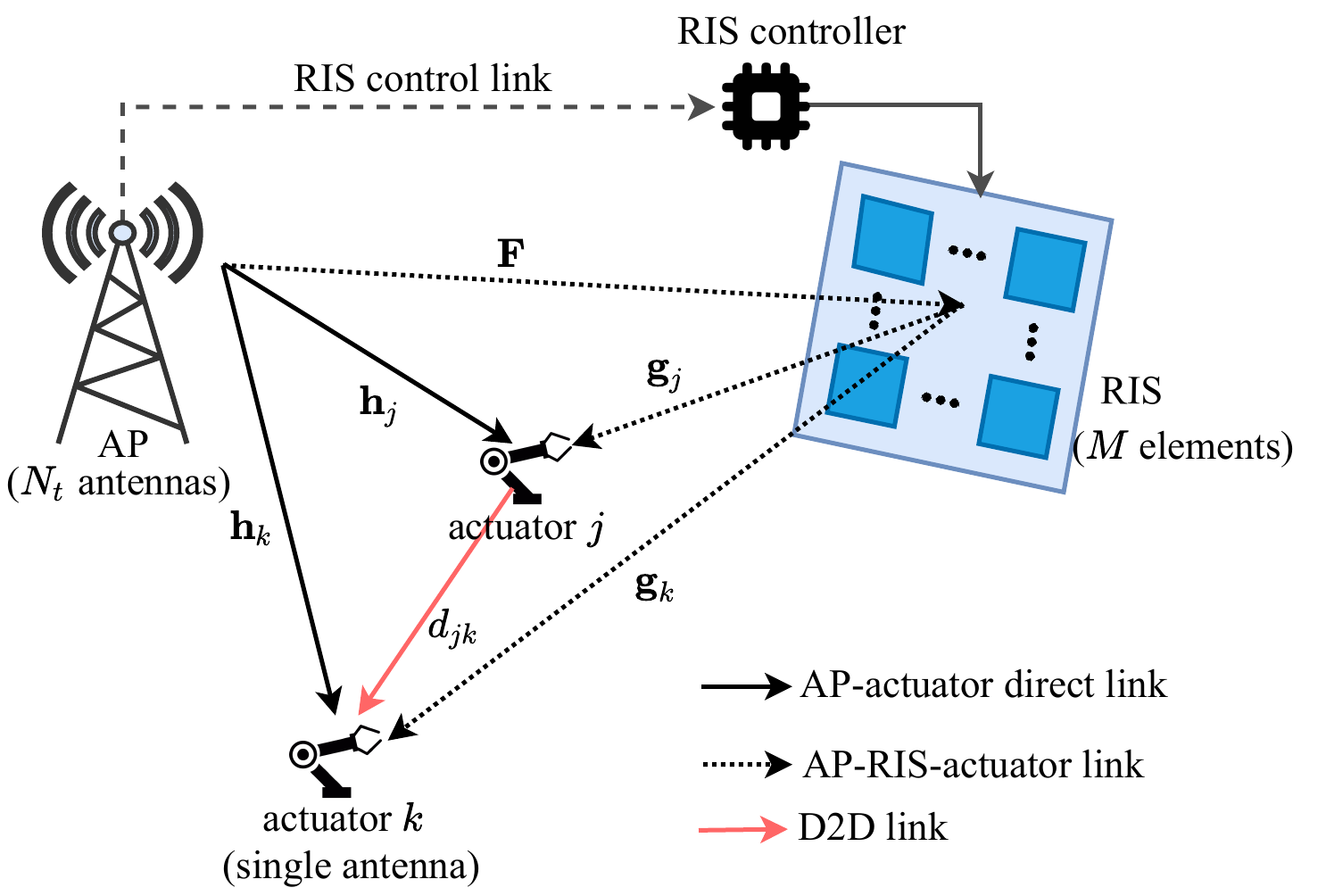}
	\caption{An illustrative example of the considered setup.}
	\label{systemModel}
\end{figure}

\vspace{-2mm}
Two stages are involved in the RIS-assisted D2D cooperative communication system. We assume that the two stages have the same transmission duration, i.e., $\tau_1=\tau_2=0.5\tau$. In the first stage with duration of $\tau_1$ seconds, each actuator can receive the direct signal from AP and the reflected signal from the RIS. In the second stage with duration of $\tau_2$ seconds, the actuators who have successfully decoded the signal in the first stage can act as relays to forward the signal to the actuators with failed recepetion via the D2D network, and they jointly decode the message received in two stages by MRC. The two-stage RIS-aided D2D communication protocol is described in detail as below.
\subsection{RIS-aided first-stage transmission}
In the first stage with duration of $\tau_1$ seconds, the transmitted signal at the AP is given by
\vspace{-1mm}
\begin{equation}
  \mathbf{x}^{(\text{I})}=\mathbf{w} s,\label{Tx_stageI}
\end{equation}
where $\mathbf{w}\in\mathbb{C}^{N_t\times 1}$ is the beamforming vector for broadcasting at the AP, and $s$ denotes the combined symbol intended for all actuators. Without loss of generality, we assume $\mathbb{E}\{|s|^2\}=1$. The received signal at the actuator $k$ can be expressed as
\vspace{-1mm}
\begin{equation}
  y_k^{(\text{I})}=\left(\mathbf{h}_k^H+\mathbf{g}_k^H\pmb{\Phi}\mathbf{F}\right)\mathbf{x}^{(\text{I})}
  +z_k^{(\text{I})},\label{Rx_stageI}
\end{equation}
where $\mathbf{h}_k\in\mathbb{C}^{N_t\times1},\mathbf{F}\in\mathbb{C}^{M\times N_t},\mathbf{g}_k\in\mathbb{C}^{M\times1}$ denote the channels from the AP to the actuator $k$, from the AP to the RIS, and from the RIS to the actuator $k$, respectively, $z_k^{(\text{I})}\sim\mathcal{CN}(0,\sigma_k^2)$ is the additive white Gaussian noise (AWGN).

By substituting \eqref{Tx_stageI} into \eqref{Rx_stageI}, we can obtain the signal-to-noise ratio (SNR) of the $k$-th actuator in the first stage as
\vspace{-2mm}
\begin{equation}
  \gamma_k^{(\text{I})}=\begin{vmatrix}\left(\bar{\mathbf{h}}_k^H+\bar{\mathbf{g}}_k^H\pmb{\Phi}\mathbf{F}\right)
  \mathbf{w}\end{vmatrix}^2,
\end{equation}
where $\bar{\mathbf{h}}_k=\frac{\mathbf{h}_k}{\sigma_k}$ and $\bar{\mathbf{g}}_k=\frac{\mathbf{g}_k}{\sigma_k}$.

Based on the normal approximation of the achievable rate for short packet communication under quasi-static AWGN channel conditions \cite{PolyanskiyITIT2010}, we can characterize the maximum achievable rate of actuator $k$ over $L_1=\tau_1B$ channel uses in the first stage by
\vspace{-2mm}
\begin{equation}
  \mathcal{C}_k^{(\text{I})}=\frac{D}{L_1}=\log_2\begin{pmatrix}1+\gamma_k^{(\text{I})}\end{pmatrix}
  -\sqrt{\frac{V_k^{(\text{I})}}{L_1}}
  Q^{-1}\begin{pmatrix}\varepsilon_k^{(\text{I})}\end{pmatrix},\label{FBL_stageI}
\end{equation}
where $B$ is the bandwidth, $D$ is the number of data bits, $V_k^{(\text{I})}=(\log_2 e)^2\begin{pmatrix}1-\begin{pmatrix}1+\gamma_k^{(\text{I})}\end{pmatrix}^{-2}\end{pmatrix}$ is the channel dispersion, $\varepsilon_k^{(\text{I})}$ is the decoding error probability, $Q^{-1}(\cdot)$ is the inverse of Q-function $Q(x)=\frac{1}{\sqrt{2 \pi}}\int_{x}^{\infty} e^{-t^{2}/2}dt$. This asymptotic expression characterizes the relationship between the achievable rate, transmission time (channel uses) and decoding error probability. From \eqref{FBL_stageI}, the decoding error probability can be written as
\vspace{-2mm}
\begin{equation}
  \varepsilon_k^{(\text{I})}\!=\!Q\!\left(\!\frac{\sqrt{L_1}\log_2\left(\!1
  \!+\!\gamma_k^{(\text{I})}\!\right)\!-\!D/\sqrt{L_1}}
  {\sqrt{V_k^{(\text{I})}}}\right)\!\triangleq\! Q\left(\!f\left(\gamma_k^{(\text{I})}\!\right)\right).\label{epsilon1}
\end{equation}

Let $\varepsilon_{\text{th}}$ denote the maximum allowed packet error probability (PEP). Note that $f\begin{pmatrix}\gamma_k^{(\text{I})}\end{pmatrix}$ is monotonically increases with $\gamma_k^{(\text{I})}$ \cite{MalakIToWC2019} and Q-function is a monotonically decreasing function. Thus, we can obtain that $\varepsilon_k^{(\text{I})}$ decreases with $\gamma_k^{(\text{I})}$. Therefore, the maximum PEP $\varepsilon_{\text{th}}$ corresponds to the minimum SNR $\gamma_{\text{th}}^{(\text{I})}$ required for successful decoding in the first stage.

In order to indicate which actuator successfully decodes the message in the first stage, we define an indicator as follows
\vspace{-2mm}
\begin{equation}
    a_k^{(\text{I})}=\left\{\begin{array}{ll}
    1,& \mathrm{if}~ \gamma_k^{(\text{I})}\geq \gamma_{\text{th}}^{(\text{I})},\\
    0,& \mathrm{otherwise}.
    \end{array}\right.
\end{equation}
\subsection{Second-stage D2D transmission}
In the second stage with duration of $\tau_2$ seconds, the actuators with successful decoding in the first stage help to relay the signal to others via the D2D network. Here, we assume that the actuators acting as relays simply forward the signal at their full power, i.e., power control is not performed since the global CSI of the D2D network is not available. The transmit signal of the actuator $k$ in the second stage can be given by
\vspace{-1mm}
\begin{equation}
  x_k^{(\text{II})}=a_k^{(\text{I})}\sqrt{P}s,
\end{equation}
where $P$ is the common transmit power for all actuators. Then, in the second stage, if the actuator $k$ cannot decode the message successfully in the first stage, its received signal can be expressed as
\vspace{-1mm}
\begin{equation}
  y_k^{(\text{II})}=\sum\limits_{j\neq k}d_{jk}x_j^{(\text{II})}+z_k^{(\text{II})},
\end{equation}
where $d_{jk}\in\mathbb{C}$ is the channel from the actuator $j$ to actuator $k$, $z_k^{(\text{II})}\sim\mathcal{CN}(0,\sigma_k^2)$ is the AWGN. At the end of this stage, the actuator $k$ jointly decodes the message based on the signals received in both two stages by
using MRC. Accordingly, the received SNR at actuator $k$ can be written as
\vspace{-2mm}
\begin{equation}
  \gamma_k^{(\text{II})}=\begin{vmatrix}\left(\bar{\mathbf{h}}_k^H+\bar{\mathbf{g}}_k^H\pmb{\Phi}\mathbf{F}\right)
  \mathbf{w}\end{vmatrix}^2+P
  \begin{vmatrix} \sum\limits_{j\neq k}a_j^{(\text{I})}\bar{d}_{jk}  \end{vmatrix}^2,
\end{equation}
where $\bar{d}_{jk}=\frac{d_{jk}}{\sigma_k}$. Then, the maximum achievable rate of actuator $k$ over $L_2=\tau_2B$ channel uses in the second stage can be characterized by
\vspace{-2mm}
\begin{equation}
  \mathcal{C}_k^{(\text{II})}=\frac{D}{L_2}=\log_2\begin{pmatrix}1+\gamma_k^{(\text{II})}\end{pmatrix}
  -\sqrt{\frac{V_k^{(\text{II})}}{L_2}}
  Q^{-1}\begin{pmatrix}\varepsilon_k^{(\text{II})}\end{pmatrix},\label{FBL_stageII}
\end{equation}
where $\varepsilon_k^{(\text{II})}$ is the decoding error probability of actuator $k$ in the second stage. Likewise, based on equation \eqref{FBL_stageII}$, \varepsilon_k^{(\text{II})}$ can be represented as
\vspace{-2mm}
\begin{equation}
  \varepsilon_k^{(\text{II})}\!=\!Q\!\left(\!\frac{\sqrt{L_2}\log_2\left(\!1
  \!+\!\gamma_k^{(\text{II})}\!\right)\!-\!D/\sqrt{L_2}}
  {\sqrt{V_k^{(\text{II})}}}\right)\!\triangleq\! Q\left(\!f\left(\gamma_k^{(\text{II})}\!\right)\right).\label{epsilon2}\\
\end{equation}
Similar to our observation from \eqref{epsilon1}, it is easy to obtain that $\varepsilon_k^{(\text{II})}$ decreases with $\gamma_k^{(\text{II})}$. Hence, the maximum PEP threshold $\varepsilon_{\text{th}}$ corresponds to the minimum SNR $\gamma_{\text{th}}^{(\text{II})}$ requirement for successful reception in the second stage. Let $a_k^{(\text{II})}$ indicate whether actuator $k$ failed in the first stage can complete the reception successfully in the second stage. More specifically, it is described as
\vspace{-2mm}
\begin{equation}
    a_k^{(\text{II})}=\left\{\begin{array}{ll}
    1,& \mathrm{if}~ a_k^{(\text{I})}=0,\gamma_k^{(\text{II})}\geq \gamma_{\text{th}}^{(\text{II})},\\
    0,& \mathrm{if}~ a_k^{(\text{I})}=0,\gamma_k^{(\text{II})}< \gamma_{\text{th}}^{(\text{II})}.
    \end{array}\right.
\end{equation}
\section{Problem Formulation}
In this paper, we aim to jointly design the active and passive beamforming at the AP and RIS respectively to maximize the number of actuators that can receive the critical control signals successfully under the two-stage RIS-aided D2D communication protocol. Mathematically, the joint optimization problem can be formulated as
\vspace{-2mm}
\begin{subequations}
\begin{align}
\mathrm{P1}:\max_{\mathbf{w},\pmb{\Phi}} \quad&\sum_{k=1}^K \begin{pmatrix}a_k^{(\text{I})}+a_k^{(\text{II})}\end{pmatrix}\\
\mathrm{s.t.}\quad~ &\|\mathbf{w}\|^2\leq P_{\text{max}},\label{powerConstraint}\\
&|\phi_m|=1,\forall m.\label{unit-modulus}
\end{align}
\end{subequations}
Constraint \eqref{powerConstraint} is the transmit power budget at the AP. The constraint \eqref{unit-modulus} refers to the unit-modulus constraint for each RIS element.

Our ultimate goal is to guarantee that all actuators can successfully receive the mission-critical control signals from AP after the two-stage transmission, i.e., $\sum_{k=1}^K \begin{pmatrix}a_k^{(\text{I})}+a_k^{(\text{II})}\end{pmatrix}=K$. Note that the larger the number of actuators with successful reception in the first stage is, the higher the probability of the remaining actuators having successful reception in the second stage with the aid of D2D network will be. This is mainly because of the high SNR of the direct D2D link between two actuators in proximity. Moreover, actuators which act as relays in the second stage are determined by the optimization results in the first stage. Thus, the original problem can be transformed into an alternative problem that to maximize the number of successful decoding actuators in the first stage. As a result, fewer remaining actuators are required to reach the SNR threshold $\gamma_{\text{th}}^{(\text{II})}$ for reliable reception in the second stage. Therefore, problem $\mathrm{P1}$ can be reformulated as
\vspace{-2mm}
\begin{subequations}
\begin{align}
\!\!\!\mathrm{P2}:\max_{\mathbf{w},\pmb{\Phi}} \quad&\sum_{k=1}^K a_k^{(\text{I})}\\
\mathrm{s.t.}\quad~ &\eqref{powerConstraint},\eqref{unit-modulus}.
\end{align}
\end{subequations}
\section{Suboptimal Joint Active and Passive Beamforming Design}
In this section, the suboptimal joint active and passive beamforming design for problem $\mathrm{P2}$ is given.
\subsection{Problem Transformation}
Note that in problem $\mathrm{P2}$, $a_k^{(\text{I})}$ is a discrete function of $\gamma_k^{(\text{I})}$. The discrete nature makes it difficult to apply the standard optimization techniques. In view of this, we introduce a set of auxiliary variables $\mathbf{q}=[q_1,\cdots,q_k]^T$ to transform problem $\mathrm{P2}$ into an equivalent but more tractable form, which can be written as
\vspace{-2mm}
\begin{subequations}
\begin{align}
\mathrm{P3}:\min_{\mathbf{q,w},\pmb{\Phi}} \quad&\|\mathbf{q}\|_0\\
\mathrm{s.t.}~~~~ &\gamma_k^{(\text{I})}+q_k\geq \gamma_{\text{th}}^{(\text{I})},\forall k,\label{P3-C1}\\
&q_k\geq 0,\forall k,\label{P3-C2}\\
&\eqref{powerConstraint},\eqref{unit-modulus}.
\end{align}
\end{subequations}
Here, nonzero element $q_k$ means the actuator $k$ cannot decode the critical signal successfully. Consequently, maximizing the number of successful actuators in the first stage is equivalent to minimizing the number of nonzero elements in vector $\mathbf{q}$.

Now, our focus turns to solve problem $\mathrm{P3}$. Its difficulty mainly includes $\ell_0$-norm in the objective function, the non-convex constraint \eqref{P3-C1} due to the coupling of $\mathbf{w}$ and $\pmb{\Phi}$ in the form of a product, and unit-modulus constraint \eqref{unit-modulus}.

First, to address the coupling problem, let $\mathbf{R}_k=\left[\mathbf{F}^H\mathrm{diag}(\bar{\mathbf{g}}_k)\quad \bar{\mathbf{h}}_k\right]\in\mathbb{C}^{N_t\times (M+1)}$, $\tilde{\mathbf{v}}=[\phi_1,\cdots,\phi_M]^H\in\mathbb{C}^{M\times 1}$, and $\mathbf{v}=[\tilde{\mathbf{v}}^T \quad1]^T$, then we can equivalently transform $\gamma_k^{(\text{I})}$ as
\begin{equation}
  \gamma_k^{(\text{I})}=\mathbf{v}^H\mathbf{R}_k^H\mathbf{w}\mathbf{w}^H\mathbf{R}_k\mathbf{v}
  =\mathrm{Tr}(\mathbf{W}\mathbf{R}_k\mathbf{V}\mathbf{R}_k^H),\label{gammakI}
\end{equation}
where $\mathbf{W}=\mathbf{w}\mathbf{w}^H,\mathbf{V}=\mathbf{v}\mathbf{v}^H$. Here, $\mathbf{W}$ and $\mathbf{V}$ have to satisfy the positive semidefinite constraints $\mathbf{W}\succeq \mathbf{0},\mathbf{V}\succeq \mathbf{0}$, and the rank-one constraints $\mathrm{rank}(\mathbf{W})=1,\mathrm{rank}(\mathbf{V})=1$.

It is necessary to remark that $\mathbf{W}$ and $\mathbf{V}$ remain coupled together in the form of a product in \eqref{gammakI}. However, we can leverage the following equality to decompose the product, which is given by
\begin{equation}
  \langle\mathbf{A,B}\rangle_F=\frac{1}{2}\left(\|\mathbf{A}+\mathbf{B}\|_{F}^{2}-\|\mathbf{A}\|_{F}^{2}
  -\|\mathbf{B}\|_{F}^{2}\right),
\end{equation}
where $\langle\mathbf{A,B}\rangle_F=\mathrm{Tr}(A^HB)$ is the Frobenius inner product. Here, $\mathbf{W}$ and $\mathbf{R}_k\mathbf{V}\mathbf{R}_k^H$ are both $N_t$-by-$N_t$ matrices. By utilizing the Hermitian property of $\mathbf{W}$, i.e., $\mathbf{W}=\mathbf{W}^H$, we can exercise the above equality for product decomposition. Therefore, $\gamma_k^{(\text{I})}$ can be equivalently written as
\begin{equation}\label{Frobenreform}
  \gamma_k^{(\text{I})}\!=\!\frac{1}{2}\!\left(\|\mathbf{W}\!+\!\mathbf{R}_k\mathbf{V}\mathbf{R}_k^H\|_{F}^{2}\!-\!\|\mathbf{W}\|_{F}^{2}
  \!-\!\|\mathbf{R}_k\mathbf{V}\mathbf{R}_k^H\|_{F}^{2}\right)\!.
\end{equation}

Since $\gamma_k^{(\text{I})}$ is not joint concave with respect to $\mathbf{W},\mathbf{V}$, based on the first-order condition, we can obtain the lower bound of the convex function $\|\mathbf{W}+\mathbf{R}_k\mathbf{V}\mathbf{R}_k^H\|_{F}^{2}$, which is given by
\begin{align}
  &f_k\left(\mathbf{W},\mathbf{V}\right)=\|\mathbf{W}^{(i)}
  +\mathbf{R}_k\mathbf{V}^{(i)}\mathbf{R}_k^H\|_{F}^{2}\notag\\
  &\!+\!2\mathrm{Tr}\!\begin{pmatrix}\mathrm{Re}\!\begin{Bmatrix}\!\begin{pmatrix}\mathbf{W}^{(i)}\!
  +\!\mathbf{R}_k\!\mathbf{V}^{(i)}\mathbf{R}_k^H\!\end{pmatrix}\!\begin{pmatrix}\mathbf{W}\!
  -\!\mathbf{W}^{(i)}\!\end{pmatrix}\!\end{Bmatrix}\!\end{pmatrix}\notag\\
  &\!+\!\!2\mathrm{Tr}\!\begin{pmatrix}\mathrm{Re}\!
  \begin{Bmatrix}\!\begin{pmatrix}\mathbf{R}_k^H\!\begin{pmatrix}\mathbf{W}^{(i)}\!
  +\!\mathbf{R}_k\!\mathbf{V}^{(i)}
  \mathbf{R}_k^H\end{pmatrix}\!\mathbf{R}_k\!\end{pmatrix}\!\!
  \begin{pmatrix}\mathbf{V}\!\!
  -\!\!\mathbf{V}^{(i)}\!\end{pmatrix}\!\end{Bmatrix}\!\end{pmatrix},
\end{align}
where $\mathbf{W}^{(i)},\mathbf{V}^{(i)}$ are feasible points in the $i$-th iteration. Thus, $\gamma_k^{(\text{I})}$ can be approximated as
\begin{equation}
  \bar{\gamma}_k^{(\text{I})}\!=\!\frac{1}{2}\!\left(f_k\left(\mathbf{W},\mathbf{V}\right)\!-\!\|\mathbf{W}\|_{F}^{2}
  \!-\!\|\mathbf{R}_k\mathbf{V}\mathbf{R}_k^H\|_{F}^{2}\right)\!.
\end{equation}
\begin{lemma}\label{lemma}
 For any $\mathbf{A}\in\mathbb{H}^{n}$, the constraint that $\mathbf{A}$ is rank-one is equivalent to
  \begin{equation}
    \|\mathbf{A}\|_{*}- \|\mathbf{A}\|_2\leq0.\label{Nuclear_spectal}
  \end{equation}
\end{lemma}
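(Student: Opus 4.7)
The plan is to reduce both norms to explicit functions of the singular values of $\mathbf{A}$ and then observe that the claimed inequality merely pins down the tail of that spectrum. First I would invoke the singular value decomposition and let $\sigma_1\geq\sigma_2\geq\cdots\geq\sigma_n\geq 0$ be the singular values of $\mathbf{A}$; since $\mathbf{A}\in\mathbb{H}^n$ these coincide with $|\lambda_i(\mathbf{A})|$, but this identification is not actually needed for the argument. From the definitions,
\begin{equation*}
\|\mathbf{A}\|_{*}=\sum_{i=1}^n \sigma_i, \qquad \|\mathbf{A}\|_{2}=\sigma_1,
\end{equation*}
so that
\begin{equation*}
\|\mathbf{A}\|_{*}-\|\mathbf{A}\|_{2}=\sum_{i=2}^{n}\sigma_i.
\end{equation*}

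Next I would exploit the nonnegativity of each $\sigma_i$. This makes the right-hand side automatically $\geq 0$, so the constraint \eqref{Nuclear_spectal} can only be met with equality, which in turn forces $\sigma_2=\cdots=\sigma_n=0$. That is precisely the condition $\mathrm{rank}(\mathbf{A})\leq 1$. Conversely, if $\mathrm{rank}(\mathbf{A})\leq 1$ then only $\sigma_1$ can be nonzero, and the two norms coincide so \eqref{Nuclear_spectal} holds with equality. Both implications are captured in one line once the singular values are exposed.

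There is essentially no obstacle here; the only thing to flag is a mild abuse of terminology: what \eqref{Nuclear_spectal} literally characterizes is $\mathrm{rank}(\mathbf{A})\leq 1$ rather than $\mathrm{rank}(\mathbf{A})=1$. In the intended use, however, $\mathbf{A}$ will be either $\mathbf{W}=\mathbf{w}\mathbf{w}^H$ or $\mathbf{V}=\mathbf{v}\mathbf{v}^H$, and the SNR constraint \eqref{P3-C1} together with a positive threshold $\gamma_{\text{th}}^{(\text{I})}$ rules out the all-zero case, so the two conditions are operationally identical. I would close by noting that the Hermitian hypothesis is cosmetic: because the derivation only uses $\sigma_i\geq 0$, the same equivalence holds for arbitrary $\mathbf{A}\in\mathbb{C}^{m\times n}$, but restricting to $\mathbb{H}^n$ matches the matrices that appear after the lifting $\mathbf{W}=\mathbf{w}\mathbf{w}^H$, $\mathbf{V}=\mathbf{v}\mathbf{v}^H$ in problem $\mathrm{P3}$.
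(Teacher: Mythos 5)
Your proof is correct and follows essentially the same route as the paper: both reduce the two norms to the singular values, note that $\|\mathbf{A}\|_{*}-\|\mathbf{A}\|_{2}=\sum_{i\geq 2}\sigma_i\geq 0$, and conclude that the constraint forces equality, hence rank one. Your remark that the condition literally characterizes $\mathrm{rank}(\mathbf{A})\leq 1$ (admitting the zero matrix) is a small but valid sharpening that the paper's proof glosses over.
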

\begin{proof}
  For any $\mathbf{A}\in\mathbb{H}^{n}$, the inequality holds $\|\mathbf{A}\|_{*}=\sum_{i}\sigma_i\geq\|\mathbf{A}\|_2=\max_{i}\{\sigma_i\}$, where $\sigma_i$ is the $i$-th singular value of matrix $\mathbf{A}$. And the equality holds if and only if $\mathbf{A}$ is rank-one. Thus, the implicit constraint of Hermitian matrix $\mathbf{A}$, i.e., $\|\mathbf{A}\|_{*}- \|\mathbf{A}\|_2\geq0$ functions simultaneously with the constraint \eqref{Nuclear_spectal}, which requires that $\|\mathbf{A}\|_{*}- \|\mathbf{A}\|_2=0$, namely $\mathbf{A}$ is a rank-one matrix. This completes the proof.
\end{proof}

According to Lemma \ref{lemma}, rank-one constraints of $\mathbf{W,V}$ can be equivalently reformulated as
\begin{equation}
  \|\mathbf{W}\|_{*}-\|\mathbf{W}\|_2\leq 0,~\|\mathbf{V}\|_{*}-\|\mathbf{V}\|_2\leq 0.\label{nuclearSpectral}
\end{equation}
Then we adopt the penalty-based method by moving the constraint \eqref{nuclearSpectral} into the objective function, thereby resulting in the following optimization problem
\vspace{-2mm}
\begin{subequations}
\begin{align}
\!\!\!\!\!\mathrm{P4}:\min_{\mathbf{q,W,V}} ~~&\|\mathbf{q}\|_0\!\!+\!\!\alpha(\|\mathbf{W}\|_{*}\!\!-\!\!\|\mathbf{W}\|_2)\!\!
+\!\!\beta(\|\mathbf{V}\|_{*}\!-\!\|\mathbf{V}\|_2)\\
\mathrm{s.t.}\quad~ &\bar{\gamma}_k^{(\text{I})}+q_k\geq \gamma_{\text{th}}^{(\text{I})},\forall k,\label{P4-C1}\\
&q_k\geq 0,\forall k,\\
&\mathrm{Tr}(\mathbf{W})\leq P_{\text{max}},\\
&\mathbf{W}\succeq \mathbf{0},\mathbf{V}\succeq \mathbf{0},\label{PSD}\\
&[\mathbf{V}]_{mm}=1,m=1,\cdots,M+1,\label{diagV}
\end{align}
\end{subequations}
where $\alpha,\beta>0$ are penalty factors penalizing the violation of the constraint \eqref{nuclearSpectral}. Since penalty terms are in a form of difference-of-convex (DC) function, we apply the first-order Taylor series approximation to $\|\mathbf{W}\|_2$ and $\|\mathbf{V}\|_2$, which are given by
\begin{equation}
  \!f_w\!\!\triangleq\!\!\|\mathbf{W}^{(i)}\|_2\!\!+\!\!\mathrm{Tr}\!\begin{pmatrix}\!\mathrm{Re}\!\begin{Bmatrix}\!
  \pmb{\lambda}_{\max}(\mathbf{W}^{(i)})\pmb{\lambda}_{\max}^H(\mathbf{W}^{(i)})(\mathbf{W}\!\!-
  \!\!\mathbf{W}^{(i)})
  \!\end{Bmatrix}\!\end{pmatrix}\!,
\end{equation}
\begin{equation}
  \!f_v\!\!\triangleq\!\!\|\mathbf{V}^{(i)}\|_2\!\!+\!\!\mathrm{Tr}\!\begin{pmatrix}\!\mathrm{Re}\!\begin{Bmatrix}\!
  \pmb{\lambda}_{\max}(\mathbf{V}^{(i)})\pmb{\lambda}_{\max}^H(\mathbf{V}^{(i)})(\mathbf{V}\!\!-
  \!\!\mathbf{V}^{(i)})\!
  \end{Bmatrix}\!\end{pmatrix}\!,
\end{equation}
where $\pmb{\lambda}_{\max}(\mathbf{W})$ denotes the eigenvector corresponding to the largest eigenvalue of matrix $\mathbf{W}$, and $\mathbf{W}^{(i)},\mathbf{V}^{(i)}$ are feasible points in the $i$-th iteration. Hence, the penalty terms related to $\mathbf{W,V}$ can be represented as $\Gamma(\mathbf{W,V})=\alpha(\|\mathbf{W}\|_{*}\!-\!f_w)\!
+\!\beta(\|\mathbf{V}\|_{*}\!-\!f_v)$.

Finally, let us deal with the $\ell_0$-norm in the objective function. We first utilize $\|\mathbf{q}\|_1$ to approximate $\|\mathbf{q}\|_0$. Note that this approximation may sometimes be loose. In view of this, we add a penalty term $\eta(\|\mathbf{q}\|_1-\|\mathbf{q}\|_2)$ to make the number of zero elements in vector $\mathbf{q}$ as many as possible, where $\eta>0$ is a penalty parameter. In general, $\|\mathbf{q}\|_1\geq \|\mathbf{q}\|_2$ always holds true where the equality holds when vector $q$ has at most one nonzero element. The two terms $\|\mathbf{q}\|_1$ and $\|\mathbf{q}\|_1- \|\mathbf{q}\|_2$ function simultaneously. As a result,  the number of zero elements in vector $\mathbf{q}$ can be maximized while minimizing the value of nonzero elements (if exists) so that the SNR requirement in the second stage can be easier to reach. Note that $\|\mathbf{q}\|_1-\|\mathbf{q}\|_2$ is in a form of DC function, we exploit the first-order Taylor series approximation of $\|\mathbf{q}\|_2$. Thus, the penalty term can be approximated as
\vspace{-2mm}
\begin{equation}
  \Xi(\mathbf{q})\triangleq \eta\begin{pmatrix}\|\mathbf{q}\|_1
-\begin{pmatrix}\mathbf{q}^{(i)}\end{pmatrix}^T\mathbf{q}/\begin{Vmatrix}\mathbf{q}^{(i)}\end{Vmatrix}_2\end{pmatrix},
\end{equation}
where $\mathbf{q}^{(i)}$ is a feasible point in the $i$-th iteration. Therefore, we have the following penalized convex optimization problem
\begin{subequations}
\begin{align}
\mathrm{P5}:\min_{\mathbf{q,W,V}} \quad&\Delta\triangleq\|\mathbf{q}\|_1+\Gamma(\mathbf{W,V})+\Xi(\mathbf{q})\\
\mathrm{s.t.}\quad~ &\eqref{P4-C1}-\eqref{diagV},
\end{align}
\end{subequations}
which can be solved by standard convex tools like CVX.
\subsection{Algorithm Design}
According to the preceding analysis, we propose a penalized SCA based algorithm to solve problem $\mathrm{P2}$, which is summarized in Algorithm \ref{Alg1}. By iteratively solving a sequence of problem $\mathrm{P5}$ at feasible points, we can obtain a stationary-point solution of problem $\mathrm{P2}$ after convergence \cite{RazaviyaynSJO2012}.
\begin{algorithm}
\footnotesize
\caption{\small \textbf{:} Penalized SCA based Algorithm for Solving $\mathrm{P2}$}
\label{Alg1}
\begin{spacing}{1.2}
	\begin{algorithmic}[1]
		\STATE \textbf{Initialize}~iteration index $i=0$, feasible $\mathbf{q}^{(0)},\mathbf{W}^{(0)},\mathbf{V}^{(0)}$.
        \STATE Use bisection search method to obtain $\gamma_{\text{th}}^{(\text{I})}$.
        \STATE Calculate $\pmb{\lambda}_{\max}(\mathbf{W}^{(0)}),\pmb{\lambda}_{\max}(\mathbf{V}^{(0)})$.
		\REPEAT
        \STATE Set $i=i+1$.
		\STATE Obtain $\{\mathbf{q,W,V}\}$ by solving $\mathrm{P5}$.
        \STATE Update $\mathbf{q}^{(i)}=\mathbf{q},\mathbf{W}^{(i)}=\mathbf{W},\mathbf{V}^{(i)}=\mathbf{V}$.
        \STATE Update $\pmb{\lambda}_{\max}(\mathbf{W}^{(i)}),\pmb{\lambda}_{\max}(\mathbf{V}^{(i)})$.		
        \UNTIL $|\Delta^{(i)}-\Delta^{(i-1)}|\leq \epsilon$.
	\end{algorithmic}
\end{spacing}
\end{algorithm}
\section{Numerical Results}
In this section, we evaluate the performance of the proposed two-stage RIS-aided D2D communication protocol. The latency requirement is set as $\tau=1$ ms, each stage with $0.5$ ms. We consider the AP equipped with $N_t=4$ antennas is located at ($0$ m, $0$ m) transmitting critical control signals to $K=10$ actuators. All actuators are randomly and uniformly distributed on a circle centered at ($80$ m, $0$ m) with radius $20$ m. The RIS with $M=16$ reflecting elements is deployed at ($50$ m, $10$ m).

The channel $\mathbf{h}_k$ form AP to actuator $k$ is modeled as
\begin{equation}
  \mathbf{h}_k=\sqrt{L_0d_{a,k}^{-\alpha_{a,k}}}
  \sqrt{\frac{\beta_{a,k}}{1+\beta_{a,k}}}\mathbf{h}_k^{\mathrm{LoS}}+
  \sqrt{\frac{1}{1+\beta_{a,k}}}\mathbf{h}_k^{\mathrm{NLoS}},
\end{equation}
where $L_0\!=\!(\lambda_c/4\pi)^2$ is a constant with $\lambda_c$ being the wavelength of the carrier frequency (2.4 GHz), $d_{a,k}$ is the distance form AP to actuator $k$, $\alpha_{a,k}$ is the corresponding path loss exponent, $\beta_{a,k}$ is a Rician factor. $\mathbf{h}_k^{\mathrm{LoS}}$ and $\mathbf{h}_k^{\mathrm{NLoS}}$ represent the deterministic LoS and Rayleigh fading components, respectively. The channels from AP to RIS and from RIS to actuators are also generated in a similar manner. The path loss exponents for AP-actuator, AP-RIS, RIS-actuator channels are $\alpha_{a,k}\!=\!4,\alpha_{a,r}\!=\!\alpha_{r,k}\!=\!2$. The Rician factors for AP-actuator, AP-RIS, RIS-actuator channels are $\beta_{a,k}\!=\!0,\beta_{a,r}\!=\!\beta_{r,k}\!=\!4$. The large-scale fading of the D2D link is $35.3+37.6\log_{10}(d)$ in dB and the corresponding small-scale fading is modeled as Rician fading with Rician factor 4. Other parameters are listed in Table \ref{table1}.
\vspace{-2mm}
\begin{table}[htbp]\footnotesize{
\centering
\caption{Simulation Parameters}
\label{table1}
\begin{tabular}{lll}
  \toprule
  Symbol & Parameter & Value\\
  \midrule
  $B$ & bandwidth & 0.5 MHz\\
  $D$ & number of data bits & 100 $\sim$ 900\\
  $P_{\max}$ &maximum transmit power at the AP  & 43 dBm\\
  $P$ & transmit power of each actuator & 23 dBm\\
  $\sigma_k^2,\forall k$ & noise power & -80 dBm\\
  $\varepsilon_{\text{th}}$ &maximum PEP &$10^{-6}$\\
  $\alpha,\beta,\eta$ &penalty factors &$10^{2}$\\
  $\epsilon$ &convergence tolerance &$10^{-4}$\\
  \bottomrule
\end{tabular}}
\end{table}

In Fig. \ref{PRC_contrast}, we investigate the probability of reliable communication (PRC) under the following 4 transmission schemes. Here, PRC is defined as the ratio of number of experiments where all actuators can decode the critical control signal over total experiments.
\begin{enumerate}
  \item RIS (MRC): RIS is integrated for assisting the first stage transmission and MRC is adopted for decoding in the second stage.
  \item RIS (no MRC): This is similar to scheme 1 while it directly decodes the received message in the the second stage without exploiting that received in the first stage.
  \item No RIS (MRC): It only considers the two-stage D2D communication where RIS does not play a role in it. Meanwhile, MRC is adopted for decoding in stage II.
  \item No RIS (no MRC): It is similar to scheme 3 whereas it decodes the message directly in the second stage.
\end{enumerate}

From Fig. \ref{PRC_contrast}, we can easily find that when RIS is integrated into the communication system, all actuators can achieve reliable communication within a larger range of data bits (i.e. $700$ bits in transmission schemes 1 and 2, whereas without assistance of RIS, all actuators can only communicate reliably within the range of $500$ bits in transmission scheme $3$ and $400$ bits in scheme 4. This demonstrates that RIS plays an important role in promoting reliable communication and URLLC-related applications. Moreover, it is observed that MRC has a more significant effect on the no RIS-assisted transmission schemes 3 and 4 than that on the RIS-aided schemes 1 and 2. This is because RIS can greatly improve the received SNR of the desired signal at the actuator. More specifically, for problem $\mathrm{P5}$, RIS aims at maximizing the number of actuators with successful decoding in the first stage while minimizing the gap of failed actuators' received SNR and SNR threshold $\gamma_{\text{th}}^{(\text{I})}$ so that the failed actuators can more easily reach the target SNR threshold $\gamma_{\text{th}}^{(\text{II})}$ in the second stage.
\vspace{-3mm}
\begin{figure}[!htbp]\centering
	\includegraphics[width=.6\linewidth]{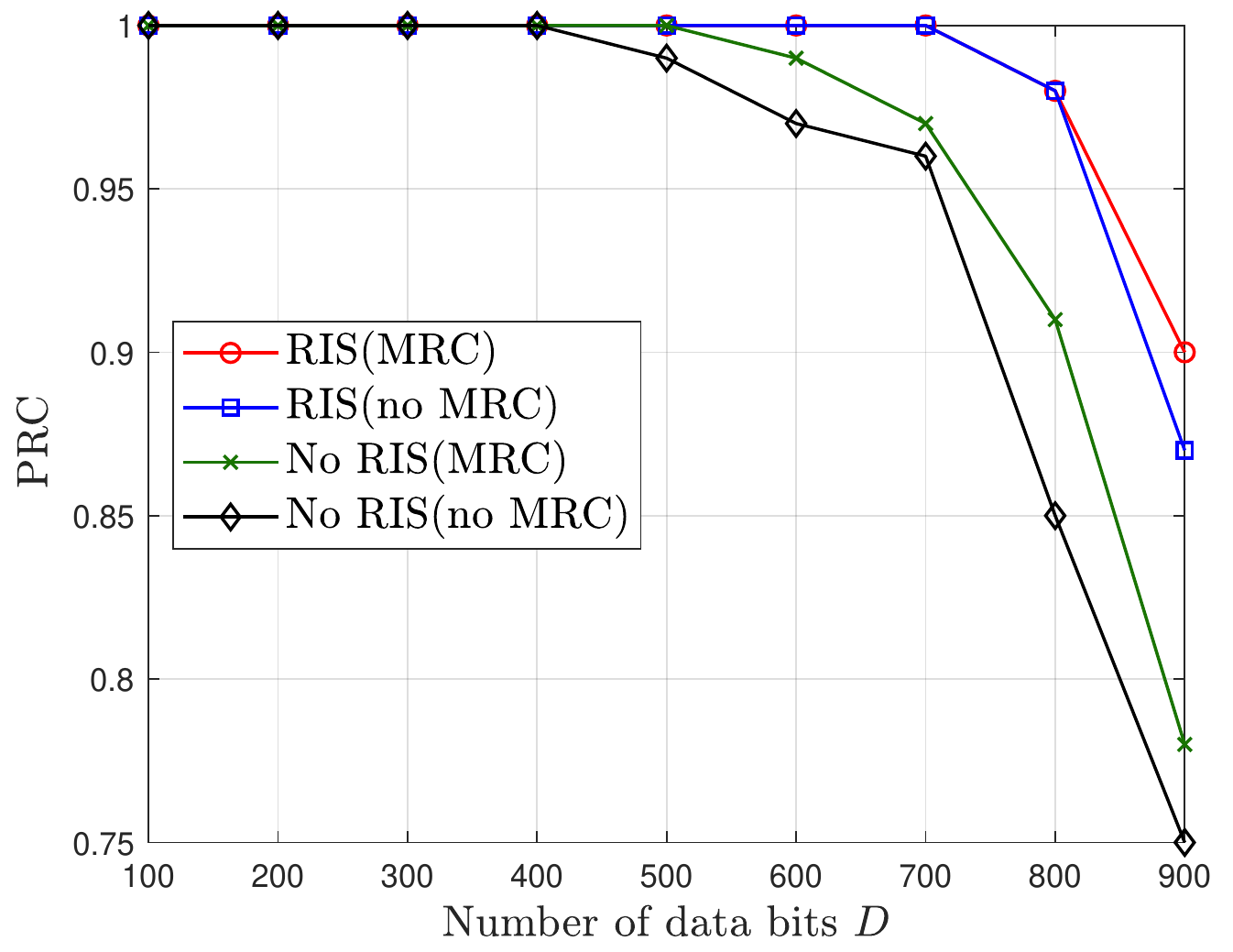}
	\caption{Probability of reliable communication versus number of data bits under different transmission schemes.}
	\label{PRC_contrast}
\end{figure}

\vspace{-2mm}
To further understand the two-stage RIS-assisted D2D communication protocol, we plot the average number of actuators with successful decoding in stage I and II under RIS and no RIS-assisted transmission schemes for different $D$ in Fig. \ref{Ptot_vs_reliability}. As can be seen from Fig. \ref{Ptot_vs_reliability} (a), when the number of data bits is small (i.e. not larger than $400$ bits), all actuators can just rely on the first-stage RIS-aided transmission to achieve reliable communication. As the number of data bits $D$ increases, second-stage D2D communication starts to function for message relaying towards the failed actuators in the first stage through strong D2D links. As a contrast, actuators have to depend on the D2D network for reliable communication from $D=300$ bits under the no RIS-assisted scheme, which is shown in Fig. \ref{Ptot_vs_reliability} (b). Thus, whether the RIS is introduced or not, when the number of data bits is large, the second-stage D2D-assisted transmission is indispensable. But with RIS, the burden of D2D network is lower and it will be easier to enable reliable communication, namely $\sum_{k=1}^K \begin{pmatrix}a_k^{(\text{I})}+a_k^{(\text{II})}\end{pmatrix}=K$.
\begin{figure}[!htbp]
\vspace{-2mm}
\centering
	\subfloat[with RIS]{\includegraphics[width=.6\linewidth]{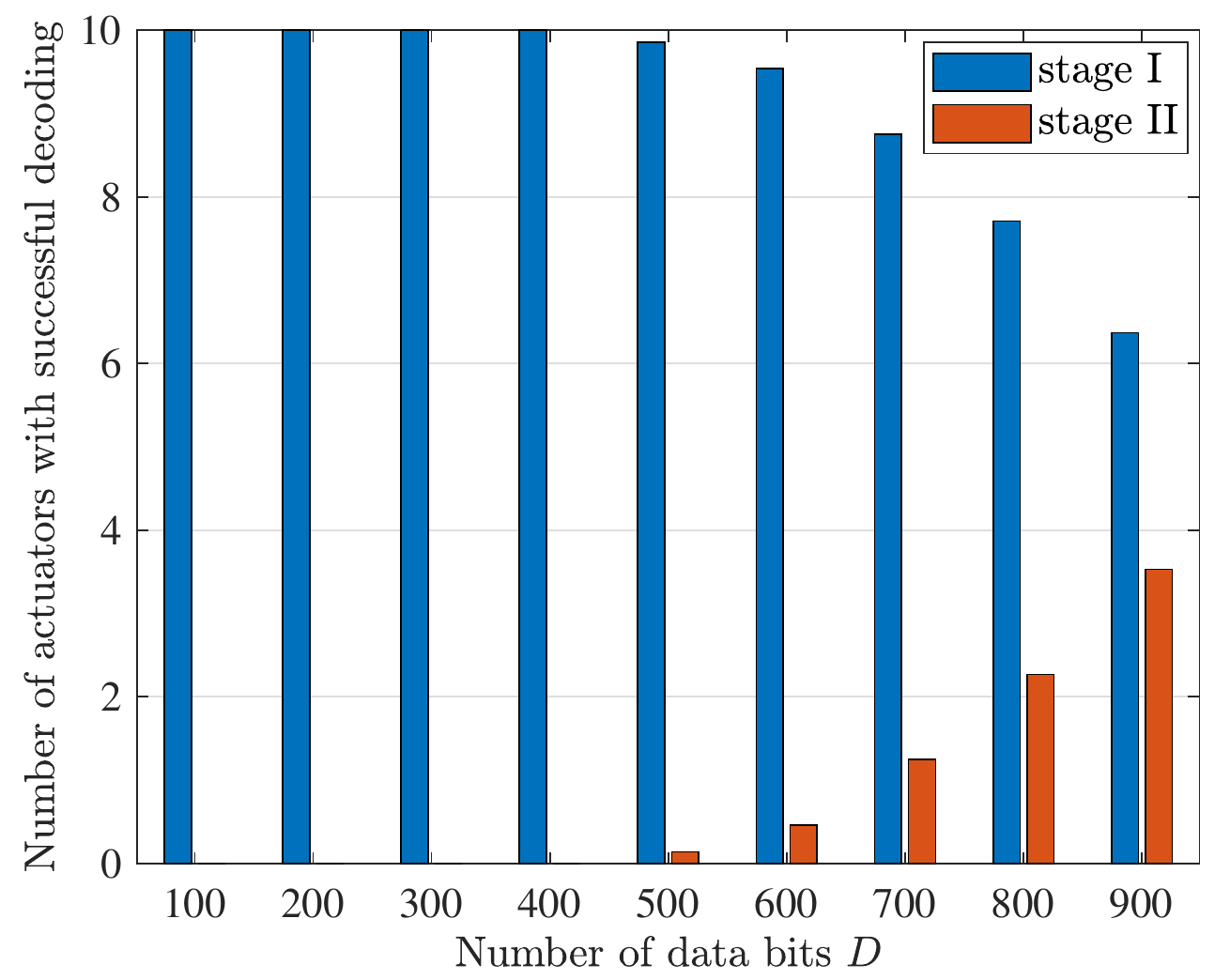}}\\[0.01mm]
	\subfloat[without RIS]{\includegraphics[width=.6\linewidth]{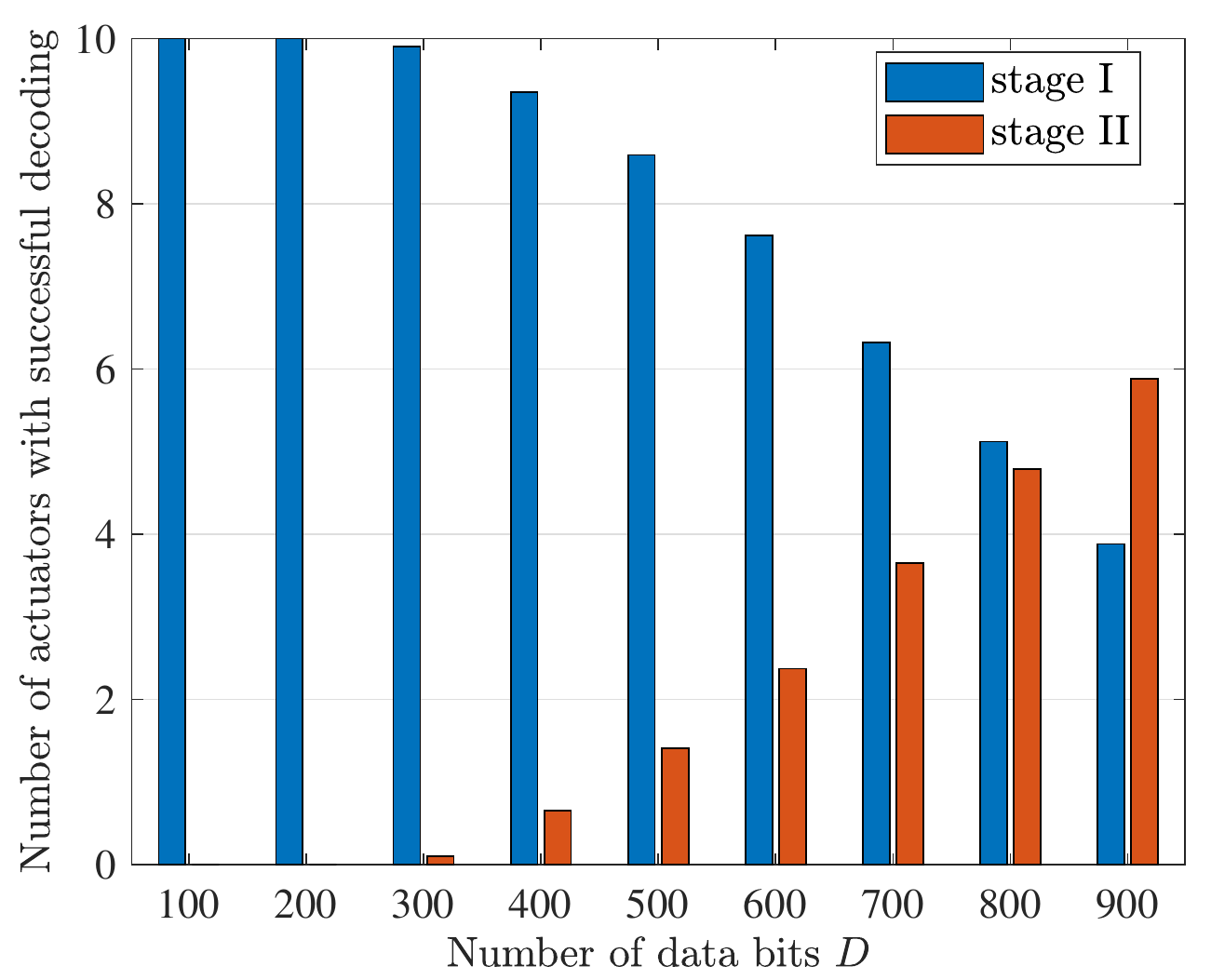}}
	\caption{Average number of actuators with successful decoding in each stage under RIS and no RIS-assisted transmission schemes for different $D$.}
	\label{Ptot_vs_reliability}
\end{figure}

Finally, in Fig. \ref{power_contrast}, we plot the average transmit power required at the AP under RIS and no RIS-assisted schemes for different number of data bits. As expected, for both schemes, the required transmit power of AP increases as the number of data bits increases, and when $D$ is relatively large, AP serves all actuators with full power for reliable communication. Notably, as compared to the no RIS-assisted scheme, the transmit power of AP can be considerably reduced by integrating the RIS into the communication system when the number of data bits is not larger than $600$. This shows another superiority of introducing RIS other than improving the quality of the received signal.
\begin{figure}[!htbp]
\setlength{\belowcaptionskip}{-0.6cm}
\centering
	\includegraphics[width=.6\linewidth]{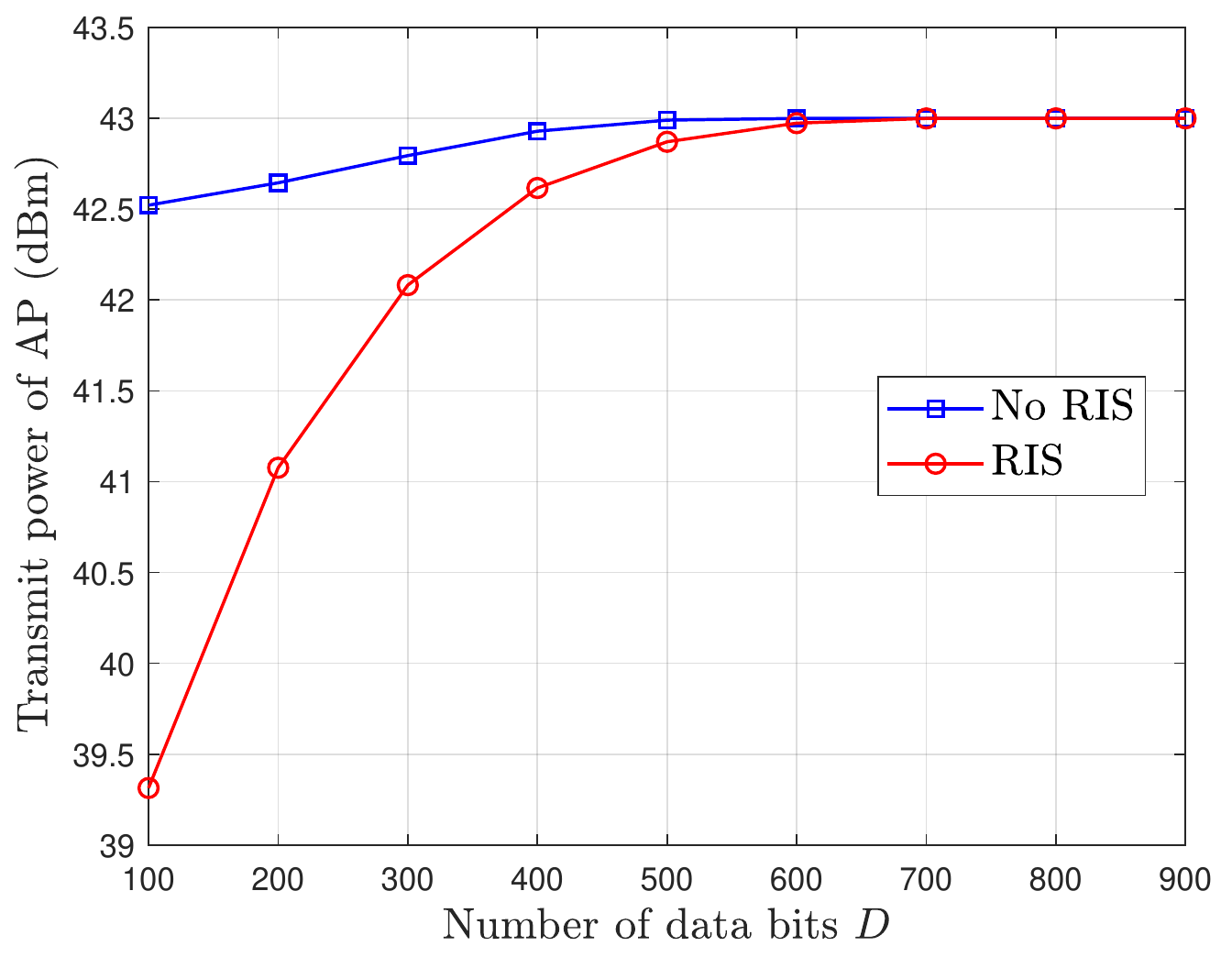}
	\caption{Average transmit power of AP under RIS and no RIS-assisted schemes.}
	\label{power_contrast}
\end{figure}
\section{Conclusion}
In this work, we proposed a two-stage RIS-aided D2D communication protocol to enable that all actuators in the smart factory can receive the critical control signals from AP successfully. Towards this end, the active beamforming at the AP and the passive phase shifts at the RIS needed to be jointly optimized. By tackling the indicator functions, we first obtained an alternative continuous form. Different from the common alternating optimization technique, we utilized the Frobenius inner product based equality to decouple the optimization variables. Then, we exploited the penalty-based approach for solving rank-one constraints. In addition, $\ell_1$-norm was applied to approximate $\ell_0$ norm and we added an extra term $\ell_1-\ell_2$ for sparsity. Numerical results validated the efficacy of the proposed protocol. Specifically, it can achieve reliable communication in a wider range of data bits as compared to the no RIS-assisted D2D communication protocol. Also, the proposed protocol can considerably reduce the transmit power consumption at the AP. The case of corresponding imperfect design will be studied in the future work.

{\footnotesize
	\bibliographystyle{IEEEtran}
	\bibliography{Reference}}

\end{document}